\documentclass[oneside,english]{amsart}
\usepackage[T1]{fontenc}
\usepackage[latin9]{inputenc}
\usepackage{float}
\usepackage{url}
\usepackage{amstext}
\usepackage{amsthm}
\usepackage{amssymb}
\usepackage{graphicx}
\PassOptionsToPackage{normalem}{ulem}
\usepackage{ulem}

\makeatletter

\newcommand*\LyXZeroWidthSpace{\hspace{0pt}}
\floatstyle{ruled}
\newfloat{algorithm}{tbp}{loa}
\providecommand{\algorithmname}{Algorithm}
\floatname{algorithm}{\protect\algorithmname}

\numberwithin{equation}{section}
\numberwithin{figure}{section}
\newenvironment{lyxcode}
	{\par\begin{list}{}{
		\setlength{\rightmargin}{\leftmargin}
		\setlength{\listparindent}{0pt}
		\raggedright
		\setlength{\itemsep}{0pt}
		\setlength{\parsep}{0pt}
		\normalfont\ttfamily}%
	 \item[]}
	{\end{list}}
\theoremstyle{plain}
\newtheorem{thm}{\protect\theoremname}[section]

\usepackage{a4wide}

\usepackage{bbm}
\usepackage{enumerate}
\usepackage{tikz}

\allowdisplaybreaks

\newcommand{\R}{\mathbb{R}}

\newcommand{\E}{\mathbb{E}}

\newcommand{\N}{\mathbb{N}}

\colorlet{mycol}{black}

\newtheorem{hypothesis}{Assumption}

\date{\today}

\AtBeginDocument{
  
}

\makeatother

\usepackage{babel}
\providecommand{\theoremname}{Theorem}

\begin{document}
\title{An Example of Ensemble Kalman Filter with Resampling}
\author{Sylvain Rubenthaler}
\curraddr{Laboratoire J. A. Dieudonn�, Parc Valrose, Universit� C�te d'Azur}
\email{rubentha@unice.fr}
\thanks{The author thanks Edoardo Calvello for providing the code from \cite{calvello-reich-stuart-2024}
via e-mail.}
\keywords{Ensemble Kalman Filter, Sequential Monte Carlo, Tracking, Stochastic
Filtering}
\begin{abstract}
This paper introduces the \emph{Exact Ensemble Kalman Filter} (ExEnKF),
a novel algorithm for state estimation in discrete-time nonlinear
filtering problems with linear observations. Unlike traditional Ensemble
Kalman Filters (EnKFs), which approximate the filtering distribution
using ensembles of Dirac measures, the ExEnKF employs Gaussian measures,
enabling more efficient exploration of the state space and potentially
alleviating the curse of dimensionality. We prove the algorithm\textquoteright s
asymptotic consistency with the optimal filter (Theorem \ref{thm:convergence}),
establishing a convergence rate of order $1/\sqrt{N}$\LyXZeroWidthSpace{}
for $N$ particles. Numerical experiments on the Lorenz-96 multiscale
model demonstrate that the ExEnKF outperforms the standard EnKF under
model misspecification and poor initialization, particularly in highly
stochastic regimes. The algorithm\textquoteright s robustness is further
highlighted by its ability to track hidden components of the true
signal, even when observations are generated from a different model
(e.g., multiscale vs. single-scale). This work advances the theoretical
understanding of ensemble methods in nonlinear filtering and provides
a practical alternative to sequential Monte Carlo methods for high-dimensional
systems
\end{abstract}

\maketitle

\section{Introduction}

Nonlinear filtering is a cornerstone of state estimation in dynamic
systems, with applications ranging from meteorology to signal processing
and geophysical modeling. While the Kalman filter framework is optimal
for linear Gaussian systems, it struggles with nonlinearities and
non-Gaussian noise, thus requiring approximations. Ensemble Kalman
Filters (EnKFs, introduced in \cite{evensen-1994}) address this challenge
by representing the forecast/prior distribution as an ensemble of
particles and computing the analysis/posterior distribution in a manner
that mimics the Kalman filter update. Here, the terms forecast/prior
and analysis/posterior distributions stem from Bayesian analysis and
are discussed in \cite{cotter-reich-2015}. Building on \cite{calvello-reich-stuart-2024},
we refer the reader to this work for a comprehensive review of Ensemble
Kalman methods. However, the analysis of the EnKF's accuracy in with
repect to the true filtering distribution remains in its infancy.
In its regard, \cite{calvello-monmarche-stuart-2024} recently demonstrated
that in a near-linear regime, the distance between the EnKF and the
theoretical filtering distribution can be bounded. This suggests that
the EnKF may not provide a consistent approximation of the filtering
distribution. 

Alternative methods, such as Sequential Monte Carlo (SMC), can be
designed to be consistent with the underlying nonlinear filtering
problem, and do not rely on exactness only for linear Gaussian problems.
The monographs \cite{doucet-de-freitas-gordon-2001,chopin-papaspiliopoulos-2020}
provide an overview use of sequential Monte Carlo methods for general
discrete time filtering and inference problems, while \cite{del-moral-1997,del-moral-guionnet-2001}
establish their convergence of sequential Monte Carlo methods, including
in specific cases over long time horizons. However, SMC methods suffer
from the curse of dimensionality and are currently not directly applicable
to high dimensional problems such as those arising in geophysical
applications.

In the paper, we focus on a discrete time stochastic filtering problem
(i.e. a state estimation problem) where a stochastic dynamical system
is observed through linear measurements with noise. Within this relatively
restricted setting, we propose an algorithm inspired by SMC particle
methods, where Dirac masses are replaced by Gaussian measures. This
enables more efficient cover of the state space (potentially mitigating
the curse of dimensionality). Our main contributions are as follows:
we introduce an algorithm called the ExEnKF (Algorithm \ref{alg:ExEnKF}),
prove its asymptotic consistency with the optimal filter (Theorem
\ref{thm:convergence}) and demonstrate its robustness to poor initialization
and model misspecification (Figure \ref{fig:Trajectories-of-the})
in scenarios where the EnKF fails. 

The paper is organized as follows. Section \ref{sec:Presentation-of-the}
presents the model. Section \ref{sec:Mathematical-result} establishes
our convergence result. In Section \ref{sec:Numerical-simulations},
we introduce a toy model and conduct simulations for a specific set
of parameters. Under this setting, we show that the EnKF is insensitive
to poor initialization and model misspecification, and performs better
than the EnKF. 

\section{Presentation of the model\protect\label{sec:Presentation-of-the}}

\subsection{Notations}

Let $d\in\N^{*}$. For any $x\in\R^{d}$, let $|x|$ denote the Euclidean
norm of $x$. Let $\mu$ be a probability measure on $\R^{d}$ (endowed
with its Borel $\sigma$-algebra) and let $Q$ be a Markov kernel
on $\R^{d}$. Let $\phi\,:\,\R^{d}\rightarrow\R$ and $\psi\,:\,\R^{d}\rightarrow\R^{+*}$
be Borel-measurable functions. We define the following:
\begin{itemize}
\item The pushforward measure $\mu Q$ on $\R^{d}$:
\[
\mu Q(dx)=\int_{\R^{d}}\mu(dy)Q(y,dx)\,.
\]
\item The expectation of $\phi$ under $\mu$:
\[
\langle\mu,\phi\rangle=\int_{\R^{d}}\phi(x)\mu(dx)\,.
\]
\item The action of $Q$ on $\phi$:
\[
Q\phi(x)=\int_{y\in\R^{d}}Q(x,dy)\phi(y)\,.
\]
\item The supremum norm of $\psi$:
\[
\Vert\psi\Vert_{\infty}=\sup_{x\in\R^{d}}|\psi(x)|\,.
\]
\item If $\langle\mu,\psi\rangle>0$, the reweighted measure $\psi\bullet\mu$:
\[
\forall f\in\mathcal{M}_{1},\,\psi\bullet\mu(f)=\frac{\mu(f\psi)}{\langle\mu,\psi\rangle}\,,
\]
where $\mathcal{M}_{1}$ denotes the set of bounded measurable functions
on $\mathbb{R}^{d}$.
\end{itemize}
For a symmetric matrix $M\in\R^{d\times d}$ and vector $m\in\R^{d}$,
let $\mathcal{N}(m;M)$ denote the Gaussian distribution with mean
$m$ and covariance $M$. For $n\in\N^{*}$, let $I_{n}$ denote the
$n\times n$ identity matrix. Let $\mathcal{C}(\dots)$ denote the
space of continuous functions. A dot ($\dot{\dots}$) above a time
dependent function (e.g. $\dot{v}$) denotes its time-derivative.
For any $n\in\N^{*}$, we define $[n]=\{1,2,\dots,n\}$.

Following \cite{calvello-reich-stuart-2024}, \cite{calvello-monmarche-stuart-2024},
we use the Mahalanobis norm $|u|_{M}^{2}=u^{\top}M^{-1}u$ (for all
$d\in\N^{*}$, for all $u\in\R^{d}$, $M\in GL_{d}(\R)$).

\subsection{Model.}

Consider a Markov process $(X_{t})_{t\geq0}$ in $\R^{d_{X}}$ ($d_{X}\in\N^{*}$)
defined by the recurrence relations
\begin{equation}
X_{t}=\Psi(X_{t-1})+V_{t}\,,\label{eq:def-X}
\end{equation}
where 
\begin{itemize}
\item $\Psi\,:\,\R^{d_{X}}\rightarrow\R^{d_{X}}$ is a deterministic drift
function,
\item $(V_{t})_{t\geq1}$ are independent and identically distributed (i.i.d.)
centered Gaussian random variables with covariance matrix $\Sigma\in\R^{d_{X}\times d_{X}}$. 
\end{itemize}
Let $K$ denote the Markov kernel associated to $(X_{t})_{t\geq0}$.

We observe the process through linear measurements $(Y_{t})_{t\geq1}\in\R^{d_{Y}}$
($d_{Y}\in\N^{*}$), given by:
\begin{equation}
Y_{t}=H\times X_{t}+W_{t}\,,\label{eq:def-Y}
\end{equation}
where 
\begin{itemize}
\item $H$ in $\R^{d_{Y}}\times\R^{d_{X}}$ is the observation matrix,
\item $(W_{t})_{t\geq1}$ are i.i.d. Gaussian random variables with covariance
matrix $\Gamma\in\R^{d_{Y}\times d_{Y}}$ (independent of $(V_{t})_{t\geq1}$).
\end{itemize}
Our goal is to compute the filtering distribution $\pi_{t}=\mathcal{L}_{\pi_{0}}(X_{t}|Y_{1},\dots,Y_{t})$
for all $t\geq1$, where the subscript $\pi_{0}$ means that $X_{0}$
is distributed according to $\pi_{0}$. This model is standard and
be found in \cite{del-moral-1998}, example 3, p. 483.

\subsection{Algorithm}

\subsubsection{Initialization}

We sample $N$ independent particles $(X_{0}^{i})_{1\leq i\leq N}$
from the initial distribution $\pi_{0}$ and approximate $\pi_{0}$
by the empirical measure:
\[
\pi_{0}^{N}:=\frac{1}{N}\sum_{k=1}^{N}\delta_{X_{0}^{k}}\,.
\]

\subsubsection{Propagation step}

For $t\geq1$, suppose we have an approximation $\pi_{t-1}^{N}$ of
$\pi_{t-1}$ $(t\geq1)$: 
\[
\pi_{t-1}^{N}=\frac{1}{N}\sum_{k=1}^{N}\delta_{X_{t-1}^{i}}
\]
for some particles $X_{t-1}^{1},\dots,X_{t-1}^{N}$). Each particle
evolves deterministically under $\Psi$ and generates a Gaussian distribution:
\begin{equation}
\forall i\in[N]\,,\,\widehat{\pi}_{t}^{i}=\mathcal{N}(\Psi(X_{t-1}^{i}),\Sigma)\,.\label{eq:ens-gaussiennes}
\end{equation}
We call these Gaussians our \uline{forecast} ensemble. 

\subsubsection{Correction step}

We then compute a \uline{weight} for each Gaussian distribution.
\[
\mu_{t}^{i}=\int_{\R^{d_{X}}}\frac{\exp(-\frac{1}{2}(u-\Psi(X_{t-1}^{i}))^{\top}\Sigma^{-1}((u-\Psi(X_{t-1}^{i})))\exp(-\frac{1}{2}(Y_{t}-Hu)^{\top}\Gamma^{-1}(Y_{t}-Hu))}{\sqrt{(2\pi)^{d_{X}}\det(\Sigma)}\sqrt{(2\pi)^{d_{Y}}\det(\Gamma)}}du\,.
\]
 We need the following assumption.

\begin{hypothesis}$\Sigma^{-1}+H^{\top}\Gamma^{-1}H$ is invertible\footnote{In the case $\Sigma=\sigma^{2}I_{d_{X}}$ ($\sigma>0$), $H=\left[\left.I_{d_{Y}}\right|0\right]\in\R^{d_{Y}\times d_{X}}$
($d_{Y}\leq d_{X}$), $\Gamma=\mu^{2}I_{d_{Y}}$ ($\mu>0$), this
assumption holds.}.\end{hypothesis}

We have, for all $i$,
\begin{multline*}
\exp\left(-\frac{1}{2}(u-\Psi(X_{t-1}^{i}))^{\top}\Sigma^{-1}((u-\Psi(X_{t-1}^{i})))\exp(-\frac{1}{2}(Y_{t}-Hu)^{\top}\Gamma^{-1}(Y_{t}-Hu)\right)=\\
\exp\left(-\frac{1}{2}|u-\Psi(X_{t-1}^{i})|_{\Sigma}^{2}-\frac{1}{2}|Y_{t}-Hu|_{\Gamma}^{2}\right)=\\
\text{(we\,write\,\ensuremath{\Psi}\,for \ensuremath{\Psi(X_{t}^{i})})}\\
\exp\left(-\frac{1}{2}|u-(\Sigma^{-1}+H^{\top}\Gamma^{-1}H)^{-1}(\Sigma^{-1}\Psi+H^{T}\Gamma^{-1}Y_{t})|_{(\Sigma^{-1}+H^{\top}\Gamma^{-1}H)^{-1}}^{2}\right.\\
\left.-\frac{1}{2}|\Psi|_{\Sigma}^{2}-\frac{1}{2}|Y_{t}|_{\Gamma}^{2}+\frac{1}{2}|(\Sigma^{-1}\Psi+H^{T}\Gamma^{-1}Y_{t})|_{\Sigma^{-1}+H^{\top}\Gamma^{-1}H}^{2}\right)\,.
\end{multline*}
So the weight
\begin{multline}
\mu_{t}^{i}:=\frac{\exp\left(-\frac{1}{2}|\Psi|_{\Sigma}^{2}-\frac{1}{2}|Y_{t}|_{\Gamma}^{2}+\frac{1}{2}|(\Sigma^{-1}\Psi+H^{T}\Gamma^{-1}Y_{t})|_{\Sigma^{-1}+H^{\top}\Gamma^{-1}H}^{2}\right)}{\sqrt{(2\pi)^{d_{X}}\det(\Sigma)}\sqrt{(2\pi)^{d_{Y}}\det(\Gamma)}}\\
\times\sqrt{(2\pi)^{d_{X}}\det((\Sigma^{-1}+H^{\top}\Gamma^{-1}H)^{-1})}\label{eq:weight}
\end{multline}
is explicit. 

The forecast ensemble $\{\widehat{\pi}_{t}^{i}\}_{1\leq i\leq N}$
is updated into a Gaussian mixture:
\begin{equation}
\frac{\sum_{i=1}^{N}\mu_{t}^{i}\mathcal{N}((\Sigma^{-1}+H^{\top}\Gamma^{-1}H)^{-1}(\Sigma^{-1}\Psi(X_{t}^{i})+H^{T}\Gamma^{-1}Y_{t});(\Sigma^{-1}+H^{\top}\Gamma^{-1}H)^{-1})}{\sum_{i=1}^{N}\mu_{t}^{i}}\,.\label{eq:gaussian-mixture}
\end{equation}
We then sample N independant points $(X_{t}^{i})_{1\leq i\leq N}$
from this mixture. We call these points our \uline{analysis} ensemble.
From these points, we get an empirical measure we call 
\[
\pi_{t}^{N}=\frac{1}{N}\sum_{i=1}^{N}\delta_{X_{t}^{i}}\,.
\]

We refer to this method as the Exact Ensemble Kalman Filter (ExEnKF),
as it is asymptotically consistent with the optimal filter (see Theorem
\ref{thm:convergence} below). The pseudocode is provided in \ref{alg:ExEnKF}
\footnote{code available at \url{https://framagit.org/rubentha/exenkf/-/tree/81bea61c743f7d55211df037aaccc6b541515ac9/}}.
\begin{algorithm}[h]
\begin{lyxcode}
Sample~$x_{0}^{1},\dots,x_{0}^{N}$~of~law~$\pi_{0}$

For~$t$~in~$\{1,2,\dots,T\}$~do:
\begin{lyxcode}
forecast~$\widehat{x}_{t}^{i}=\Psi(x_{t-1}^{i})$~($i=1,\dots,N$)

compute~weight~$\mu_{t}^{i}$~($i=1,\dots,N$)~via~(\ref{eq:weight})

sample~$x_{t}^{1},\dots,x_{t}^{N}$~from~Gaussian~mixture(\ref{eq:gaussian-mixture})

define~$\pi_{t}^{N}=\frac{1}{N}\sum_{i=1}^{N}\delta_{x_{t}^{i}}$
\end{lyxcode}
return~$\pi_{T}^{N}=\frac{1}{N}\sum_{i=1}^{N}\delta_{x_{T}^{i}}$
\end{lyxcode}
\caption{Exact Ensemble Kalman Filter (ExEnKF)\protect\label{alg:ExEnKF}}
\end{algorithm}

\section{Mathematical result\protect\label{sec:Mathematical-result}}
\begin{thm}
\label{thm:convergence}For all $t\geq0$, there exists a constant
$C_{t}$ such that, for all $N$:
\begin{equation}
\sup_{\phi\in\mathcal{M}_{1}}\E(|\langle\pi_{t}-\pi_{t}^{N},\phi\rangle|)\leq\frac{C_{t}}{\sqrt{N}}\,,\label{eq:convergence}
\end{equation}
where $\mathcal{M}_{1}$ is the set of Borel-measurable functions
$\R^{d_{X}}\rightarrow\R$ such that $\sup_{x\in\R^{d_{X}}}|\phi(x)|\leq1$.
\end{thm}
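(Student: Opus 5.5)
The proof goes by induction on $t$, using the standard particle-filter decomposition in the style of \cite{del-moral-1997,del-moral-guionnet-2001}. The first step is to recognise that the mixture (\ref{eq:gaussian-mixture}) is \emph{exactly} the measure $g_{t}\bullet(\pi_{t-1}^{N}K)$, where $g_{t}(u)=\exp(-\frac{1}{2}|Y_{t}-Hu|_{\Gamma}^{2})$ is the likelihood. Indeed, $\pi_{t-1}^{N}K=\frac{1}{N}\sum_{i}\mathcal{N}(\Psi(X_{t-1}^{i}),\Sigma)$, and the completing-the-square computation preceding (\ref{eq:weight}) shows that reweighting each Gaussian by $g_{t}$ produces the mass $\mu_{t}^{i}$ (up to a common constant) times the Gaussian with the stated mean and covariance. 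The exact filter satisfies the same recursion, $\pi_{t}=g_{t}\bullet(\pi_{t-1}K)$. Writing $\Phi_{t}(\mu):=g_{t}\bullet(\mu K)$, the algorithm therefore reads: $X_{t}^{1},\dots,X_{t}^{N}$ are i.i.d.\ from $\Phi_{t}(\pi_{t-1}^{N})$ conditionally on $\mathcal{F}_{t-1}=\sigma(X_{s}^{i},\,s\leq t-1)$, with the observations held fixed throughout.

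For $t=0$ the bound is the usual i.i.d.\ estimate: $\E|\langle\pi_{0}-\pi_{0}^{N},\phi\rangle|\leq(\E\langle\pi_{0}-\pi_{0}^{N},\phi\rangle^{2})^{1/2}\leq1/\sqrt{N}$. For $t\geq1$ we split
\[
\langle\pi_{t}^{N}-\pi_{t},\phi\rangle=\langle\pi_{t}^{N}-\Phi_{t}(\pi_{t-1}^{N}),\phi\rangle+\langle\Phi_{t}(\pi_{t-1}^{N})-\Phi_{t}(\pi_{t-1}),\phi\rangle .
\]
Conditionally on $\mathcal{F}_{t-1}$, the first term is an average of $N$ centred i.i.d.\ variables bounded by $2$, so its $L^{1}$ norm is at most $2/\sqrt{N}$.

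The second term, the stability of $\Phi_{t}$, is the main step. Set $\eta=\pi_{t-1}^{N}$ and $\eta'=\pi_{t-1}$, and use the identity
\[
\frac{\eta K(g_{t}\phi)}{\eta K g_{t}}-\frac{\eta' K(g_{t}\phi)}{\eta' Kg_{t}}=\frac{1}{\eta' Kg_{t}}\Bigl(\langle\eta-\eta',K(g_{t}\phi)\rangle-\Phi_{t}(\eta)(\phi)\,\langle\eta-\eta',Kg_{t}\rangle\Bigr).
\]
Here $0<g_{t}\leq1$, so both $K(g_{t}\phi)$ and $Kg_{t}$ lie in $\mathcal{M}_{1}$, and $|\Phi_{t}(\eta)(\phi)|\leq1$. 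The obstacle is the random factor $\Phi_{t}(\eta)(\phi)$ multiplying the error; it is handled by the bound $|\Phi_{t}(\eta)(\phi)|\leq1$ before taking expectations. This gives
\[
\E|\langle\Phi_{t}(\pi_{t-1}^{N})-\pi_{t},\phi\rangle|\leq\frac{1}{\langle\pi_{t-1}K,g_{t}\rangle}\Bigl(\E|\langle\pi_{t-1}^{N}-\pi_{t-1},K(g_{t}\phi)\rangle|+\E|\langle\pi_{t-1}^{N}-\pi_{t-1},Kg_{t}\rangle|\Bigr)\leq\frac{2C_{t-1}}{\langle\pi_{t-1}K,g_{t}\rangle\sqrt{N}},
\]
by the induction hypothesis. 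The denominator is deterministic given the observations and strictly positive, since $Kg_{t}(x)>0$ for every $x$: it is the integral of a positive function against a nondegenerate Gaussian, which requires $\Sigma$ positive definite, an assumption implicit in the use of $\Sigma^{-1}$.

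Combining the two terms gives $C_{t}=2+2C_{t-1}/\langle\pi_{t-1}K,g_{t}\rangle$, and taking the supremum over $\phi\in\mathcal{M}_{1}$ yields (\ref{eq:convergence}). The only genuinely model-specific point is the exact identification of the Gaussian mixture with $\Phi_{t}(\pi_{t-1}^{N})$. Because of it, the Gaussians introduce no error beyond the resampling step, and the argument reduces to the classical bootstrap-filter proof.
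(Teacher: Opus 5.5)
Your proof is correct, and its skeleton is the paper's. The paper also argues by induction on $t$, uses $\langle\mu K-\mu'K,\phi\rangle=\langle\mu-\mu',K\phi\rangle$ with $K\phi\in\mathcal{M}_1$ to make the prediction step free, and pays a factor $2\Vert g_t\Vert_\infty/\langle\pi_{t-1}K,g_t\rangle$ for the Bayes reweighting.

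You differ from the paper in two ways.

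First, you prove the reweighting estimate yourself, via the explicit identity, with $K$ folded into the test functions $K(g_t\phi)$ and $Kg_t$. The paper instead bounds $\widehat{\pi}_{t+1}^{N}-\pi_{t}K$ (implicitly $\widehat{\pi}_{t+1}^{N}=\pi_t^NK$) and then quotes Equation (2.7) of \cite{oudja2000}. The paper's constant keeps $\Vert\psi_{t+1}\Vert_\infty$, which may be $<1$, but that difference is cosmetic. The citation buys brevity; your identity makes the argument self-contained. You also state explicitly that the mixture (\ref{eq:gaussian-mixture}) is exactly $g_t\bullet(\pi_{t-1}^NK)$, which the paper only uses implicitly.

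Second, and more substantively, your splitting contains the sampling term $\langle\pi_t^N-\Phi_t(\pi_{t-1}^N),\phi\rangle$. The paper's proof never accounts for this term. It bounds $\psi_{t+1}\bullet\widehat{\pi}_{t+1}^N$, which is the mixture \emph{before} the $N$ i.i.d.\ draws. It then asserts the bound for $\pi_{t+1}^N$ with the purely multiplicative recursion $C_{t+1}=\frac{2\Vert\psi_{t+1}\Vert_\infty}{\langle\pi_tK,\psi_{t+1}\rangle}C_t$. Your recursion $C_t=2+2C_{t-1}/\langle\pi_{t-1}K,g_t\rangle$ is what the argument actually gives, so your version is the complete one. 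The theorem itself, which only asserts that some $C_t$ exists, is unaffected.

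One small remark: strict positivity of $Kg_t$ does not require $\Sigma$ to be nondegenerate. It holds because $g_t>0$ everywhere and $K(x,\cdot)$ is a probability measure.
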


\begin{proof}
We will prove Equation (\ref{eq:convergence}) recursively on $t$.
\begin{itemize}
\item We have (see Lemma 5.1, p. 161 of \cite{le-gland-oudjane-2004})
\[
\sup_{\phi\in\mathcal{M}_{1}}\E(|\langle\pi_{0}^{N}-\pi_{0},\phi\rangle|)\leq\frac{1}{\sqrt{N}}\,.
\]
\item Suppose we have Equation (\ref{eq:convergence}) up to time $t$.
For two probability measures $\mu$, $\mu'$ on $\R^{d_{X}}$ and
$\phi$ in $\mathcal{M}_{1}$, 
\begin{eqnarray*}
\langle\mu K-\mu'K,\phi\rangle & = & \langle\mu-\mu',K\phi\rangle
\end{eqnarray*}
and $K\phi\in\mathcal{M}_{1}$. So (as in Lemma 5.2 of \cite{oudjane-rubenthaler-2005},
or Proposition 2.5 of \cite{oudja2000})
\[
\sup_{\phi\in\mathcal{M}_{1}}\E(|\langle\widehat{\pi}_{t+1}^{N}-\pi_{t}K,\phi\rangle|)\leq\frac{C_{t}}{\sqrt{N}}\,.
\]
We set 
\[
\psi_{t+1}(u)=\exp\left(-\frac{1}{2}(Y_{t+1}-Hu)^{T}\Gamma^{-1}(Y_{t+1}-Hu)\right)\,.
\]
We have (Equation (2.7), p. 37 of \cite{oudja2000})
\begin{eqnarray*}
\sup_{\phi\in\mathcal{M}_{1}}\E(|\langle\psi_{t+1}\bullet\widehat{\pi}_{t+1}^{N}-\psi_{t+1}\bullet(\pi_{t}K),\phi\rangle|) & \leq & \frac{2\Vert\psi_{t+1}\Vert_{\infty}}{\langle\pi_{t}K,\psi_{t+1}\rangle}\sup_{\phi\in\mathcal{M}_{1}}\E(|\langle\widehat{\pi}_{t+1}^{N}-\pi_{t}K,\phi\rangle|)\\
 & \leq & \frac{2\Vert\psi_{t+1}\Vert_{\infty}}{\langle\pi_{t}K,\psi_{t+1}\rangle}\times\frac{C_{t}}{\sqrt{N}}\,.
\end{eqnarray*}
So, we get Equation (\ref{eq:convergence}) in $t+1$ with 
\[
C_{t+1}=\frac{2\Vert\psi_{t+1}\Vert_{\infty}}{\langle\pi_{t}K,\psi_{t+1}\rangle}\times C_{t}\,.
\]
\end{itemize}
\end{proof}

\section{Numerical simulations\protect\label{sec:Numerical-simulations}}

In this section, we compare the Exact Ensemble Kalman Filter (ExEnKF)
with the standard Ensemble Kalman Filter (EnKF) as described in \cite{calvello-reich-stuart-2024}
(Algorithm 2, p. 144). Our experiments are based on variants of the
Lorenz-96 model, a widely used benchmark in data assimilation and
nonlinear filtering. 

\subsection{Lorenz-96 Multiscale Model}

Let $L$, $J$ in $\N^{*}$. We have slow variables $v\in\mathcal{C}(\R^{+},\R^{L})$
and fast variables $w\in\mathcal{C}(\R^{*},\R^{L\times J})$. Each
slow variable $v_{l}$ is coupled to a group of fast variables $w_{l}=\{w_{l,j}\}_{j=1}^{J}$.
The system is governed by the following ordinary differential equations
(ODEs) (the difference between slow and fast variables is purely visual:
some oscillate faster than others). For $l=1,\dots,L$ and $j=1,\dots,J$:
\begin{equation}
\dot{v_{l}}=f_{l}(v)+h_{v}\overline{w}_{l}\,,\,\overline{w}_{l}=\frac{1}{J}\sum_{j=1}^{J}w_{l,j}\,,\label{eq:LMM-01}
\end{equation}
\begin{equation}
\dot{w}_{l,j}=\frac{1}{\epsilon}r_{j}(v_{l},w_{l})\,,\label{eq:LMM-02}
\end{equation}
where
\begin{equation}
f_{l}(v)=-v_{l-1}(v_{l-2}-v_{l+1})-v_{l}+F\,,\label{eq:LMM-03}
\end{equation}
\begin{equation}
r_{j}(v_{l},w_{l})=-w_{l,j+1}(w_{l,j+2}-w_{l,j-1})-w_{l,j}+h_{w}v_{l}\,,\label{eq:LMM-04}
\end{equation}
Boundary conditions are imposed as:
\begin{equation}
v_{l+L}=v_{l}\,,\,w_{l+L,j}=w_{l,j}\,,\,w_{l,j+J}=w_{l+1},j\,.\label{eq:LMM-05}
\end{equation}
Here:
\begin{itemize}
\item $\epsilon>0$ is a scale separation parameter, 
\item $h_{v}$, $h_{w}$ in $\R$ govern the couplings between the fast
and slow system 
\item $F>0$ is a constant forcing.
\end{itemize}

\subsection{Lorenz-96 Singlescale Model }

Let $v\in\mathcal{C}(\R^{+},\R^{L})$. For $l=1,\dots,L$, we have
ODEs
\begin{equation}
\dot{v_{l}}=f_{l}(v)+h_{v}\overline{w}_{l}\,,\,\overline{w}_{l}=M_{l}(v_{l}),\label{eq:LSM-01}
\end{equation}
where $f_{l}$ is defined in (\ref{eq:LMM-03}), and $(M_{l})_{1\leq l\leq L}$
are suitably chosen functions. 

We follow here the reasoning of \cite{calvello-reich-stuart-2024},
p. 146. If $\epsilon\ll1$, the dynamics for the $w$ governed by
(\ref{eq:LMM-02}) evolve on a much faster timescale than the dynamics
for the $v$ governed by (\ref{eq:LMM-01}). Thus it is a reasonable
approximation to think of $v$ as frozen in (\ref{eq:LMM-02}). If
we assume that the dynamics of $w$ with $v$ frozen are ergodic with
invariant measure $\mu^{v}(dw)$ (a measure in $w$, parametrized
by $v$) then the averaging principle (\cite{abdulle-weinan-engquist-vanden-eijnden-2012,vanden-eijnden-2003,pavliotis-stuart-2008})
suggests that we may make the approximation (in (\ref{eq:LMM-01}))
\[
\overline{w}\approx M(v):=\int\left(\frac{1}{J}\sum_{j=1}^{J}w_{j}\right)\mu^{v}(dw)\,.
\]
We call $M_{l}$ the $l$-th component of $M$ above. We can add the
approximation that, for all $l$, $M_{l}(v)$ is a function $m(v_{l})$
of the sole component $v_{l}$ (and not of the whole vector $v$)
(approximation that is shown to be valid for large $J$ in \cite{fatkullin-vanden-eijnden-2004}),
then we arrive at the singlescale Lorenz-96 model (\ref{eq:LSM-01}).
The function $m$ is not given explicitly, but may be estimated from
data. Figure 2.1 of \cite{calvello-reich-stuart-2024} shows such
an $m$ fit using Gaussian process regression methodology. Essentially,
you need to make a simulation of $(v,w)$ solution of (\ref{eq:LMM-01})-(\ref{eq:LMM-02})
then fit a function $m$ such that 
\[
\frac{1}{J}\sum_{j=1}^{J}w_{l,j}=m(v_{l})\,,\,\forall l\,.
\]

\subsection{Numerical experiments}

We wish to compare our algorithm to the EnKF of \cite{calvello-reich-stuart-2024}
(the code is available at \url{https://github.com/EdoardoCalvello/EnsembleKalmanMethods/}). 

\subsubsection{Generating the true signal}

To create a benchmark dataset, we simulate the true signal using the
Lorenz-96 multiscale model. We introduce a positive $\tau$ ($\tau$
will be the observation time interval). Let $L=9$, $J=8$. We initialize
the system with arbitrary values for $(v_{0},w_{0})=:x_{0}^{\dagger}\in\R^{L}\times\R^{L\times J}$
(all the components set to zero). For $t=0,1,2,\dots,T_{\max}-1$
($T_{max}\in\N$), we proceed as follows
\begin{itemize}
\item Simulate the Lorenz-96 dynamics over a time interval of length $\tau$
starting from $x_{t}^{\dagger}$ yielding a vector $\varphi(x_{t}^{\dagger})$.
\item Update the true signal
\[
x_{t+1}^{\dagger}=\varphi(x_{t}^{\dagger})+\sigma Z_{t+1}
\]
 (where $Z_{t+1}\sim\mathcal{N}(0,I_{L+J})$ is independent of $Z_{1},\dots,Z_{t}$
and $\sigma>0$. As in \cite{calvello-monmarche-stuart-2024,calvello-reich-stuart-2024},
the superscript $\dagger$ indicate that $x_{t}^{\dagger}$ is the
true signal, and remain fixed throughout the experiment.
\end{itemize}

\subsubsection{Generating the true observations}

We define the observation operator $H\in\R^{9\times6}$ as
\[
H=\left[\begin{array}{ccccccccc}
1 & 0 & 0 & 0 & 0 & 0 & 0 & 0 & 0\\
0 & 1 & 0 & 0 & 0 & 0 & 0 & 0 & 0\\
0 & 0 & 0 & 1 & 0 & 0 & 0 & 0 & 0\\
0 & 0 & 0 & 0 & 1 & 0 & 0 & 0 & 0\\
0 & 0 & 0 & 0 & 0 & 0 & 1 & 0 & 0\\
0 & 0 & 0 & 0 & 0 & 0 & 0 & 1 & 0
\end{array}\right]\,.
\]
For $(x_{1},x_{2},\dots,x_{9})^{T}\in\R^{L}$, the observation is
\[
Hx^{T}=(x_{1},x_{2},x_{4},x_{5},x_{7},x_{8})^{T}\,.
\]
The true observations $y_{t}^{\dagger}$ are generated as: 
\begin{equation}
y_{t}^{\dagger}=H\overline{x}_{t}^{\dagger}+\gamma\eta_{t}\,,\label{eq:true-observations}
\end{equation}
where
\begin{itemize}
\item $\overline{x}_{t}^{\dagger}$ is the $L$ first components of $x_{t}^{\dagger}$
(the slow variables $v$),
\item $\eta_{t}\sim\mathcal{N}(0,I_{6})$ are independent standard Gaussian
random variables,
\item $\gamma>0$ controls the observation noise.
\end{itemize}
As in \cite{calvello-monmarche-stuart-2024,calvello-reich-stuart-2024},
the superscript $\dagger$ indicates that $y_{t}^{\dagger}$ is the
true observations and remain fixed throughout the experiment. 

\subsubsection{Running ExEnKF}

We apply Algorithm \ref{alg:ExEnKF} (ExEnKF) to the observations
$(y_{t}^{\dag})_{t\geq1}$. To do so, we use the function $m$ estimated
in \cite{calvello-reich-stuart-2024}. For an initial condition $v_{0}\in\R^{L}$,
we define $\Psi(v_{0})$ as the solution of the singlescale Lorenz-96
model (\ref{eq:LSM-01}) at time $\tau$, with $M_{l}=m$ for all
$l$. We then run Algorithm \ref{alg:ExEnKF} (loop $T_{\text{max}}$
times) with:
\begin{itemize}
\item the dynamics $\Psi=\varphi$, 
\item the observations $(y_{t}^{\dagger})_{t\geq1}$,
\item a fixed number of particles $N$. 
\end{itemize}
This yields empirical measures $(\pi_{t}^{N})_{1\leq t\leq T_{\max}}$.

\subsubsection{Model misspecification and poor initialization}

To test the robustness of ExEnKF, we intentionally introduce two challenges:
\begin{enumerate}
\item Model misspecification: The observations $(y_{t}^{\dagger})_{t\geq1}$
are generated using the multiscale Lorenz-96 model, but ExEnKF assumes
the singlescale model for state estimation.
\item Poor initialization: The particles in ExEnKF are initialized with
$x_{0}^{i}\sim\mathcal{N}(10,10\times I_{L})$, rather than the true
initial condition $x_{0}^{\dagger}=0$.
\end{enumerate}
For each time $t$, we compute the empirical mean $\langle\pi_{t}^{N},\phi_{3}\rangle$
where $\phi_{3}$ is defined by $\phi(x_{1},\dots,x_{9})=x_{3}$ (the
third component of the state). This allows us to compare the estimated
trajectory of the hidden component $x_{3}$ with the true signal $\phi_{3}(x_{t}^{\dagger})$.
Note that the third component of $x_{t}^{\dagger}$ does not appear
in the observations $y_{t}^{\dagger}$ (see $H$ above), making this
a stringent test of the algorithm's ability to track unobserved components. 

\subsubsection{Results}

\begin{figure}
\begin{centering}
\begin{minipage}[t]{0.48\columnwidth}%
\begin{center}
\includegraphics[scale=0.5]{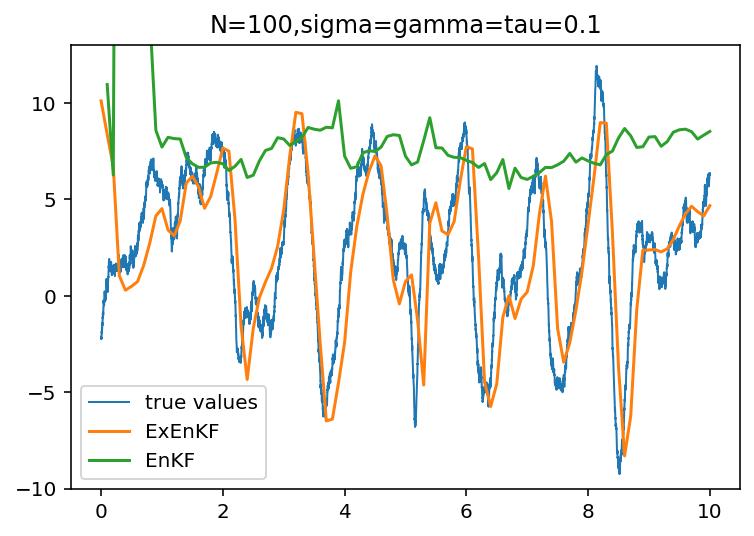}
\par\end{center}%
\end{minipage}\hphantom{}%
\begin{minipage}[t]{0.48\columnwidth}%
\begin{center}
\includegraphics[scale=0.5]{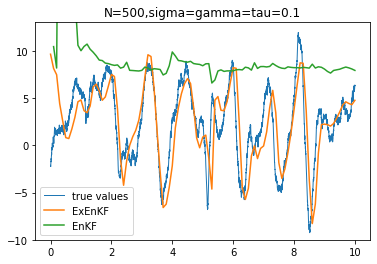}
\par\end{center}%
\end{minipage}\caption{\protect\label{fig:Trajectories-of-the}Trajectories of the third
component of the signal (true and estimated).}
\par\end{centering}
\end{figure}
We present the results in Figure \ref{fig:Trajectories-of-the} which
shows the true trajectory (blue) alongside the estimates from EnKF
(green) and ExEnKF (orange). In all experiments, we set 
\[
\sigma=\gamma=\tau=0.1.
\]
These parameters correspond to a highly stochastic regime, where traditional
EnKF methods often struggle. As shown in Figure 2.5 of \cite{calvello-monmarche-stuart-2024},
EnKF performs well for $\sigma=\gamma=0.1$ and $\tau=10^{-3}$. EnKF's
principle is that you run computations similar to those used for a
Kalman filter (where all the measures are Gaussian and $\Psi$ is
a linear operator). When $\tau$ increases, this approximation is
bound to fail (see Figure \ref{fig:Trajectories-of-the}). We also
see in Figure \ref{fig:Trajectories-of-the} that ExEnKF manages to
track down the true trajectory after a poor initialization at time
$0$. We tried the various algorithms with $N=100$ and $N=500$ without
noticeable variation.

Like in \cite{calvello-monmarche-stuart-2024}, the algorithm struggles
to recover the original trajectory when $\tau$ increases (see Figure
\ref{fig:Trajectories-of-the-1}).
\begin{figure}
\begin{minipage}[t]{0.45\columnwidth}%
\includegraphics[scale=0.5]{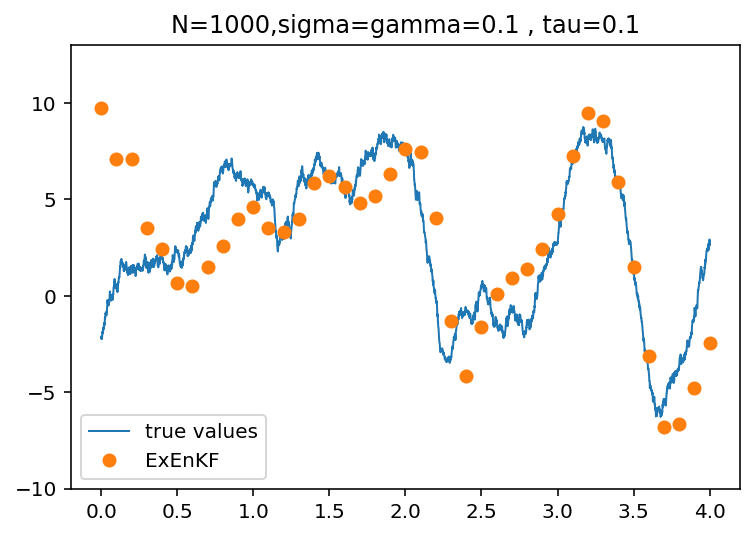}%
\end{minipage}\hphantom{}%
\begin{minipage}[t]{0.45\columnwidth}%
\includegraphics[scale=0.5]{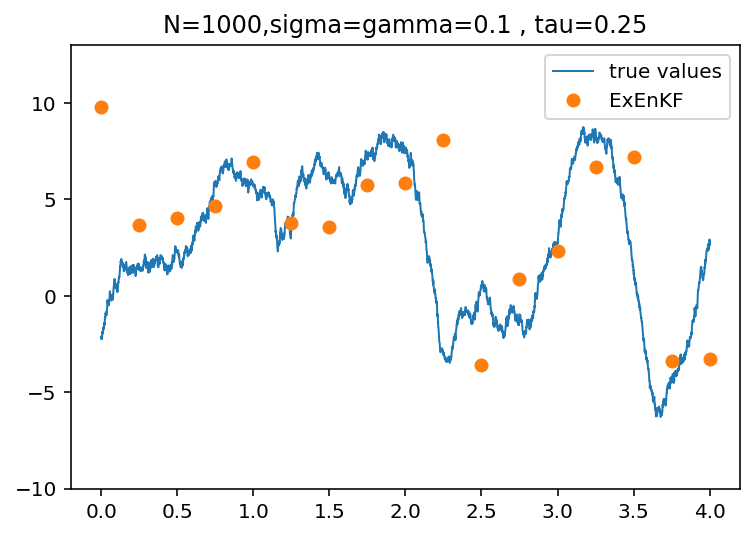}%
\end{minipage}

\caption{Trajectories of the third component of the signal (true and estimated
by ExEnKF) for\protect\label{fig:Trajectories-of-the-1} large $\tau$.}

\end{figure}

\begin{quote}
\bibliographystyle{amsalpha}
\bibliography{EnKF}
\end{quote}

\end{document}